\documentclass[12pt,a4paper]{amsart}

\usepackage[latin1]{inputenc}     
\usepackage{amsmath,amssymb,amsthm}
\usepackage{latexsym}
\usepackage{amsfonts}
\usepackage{bbm,bm,dsfont} 
\usepackage{color} 
\usepackage{graphicx}

\numberwithin{equation}{section} 

\theoremstyle{definition}
\newtheorem{proposition}{Proposition}

\newtheorem{example}{Example}
\newtheorem{remark}{Remark}
\newtheorem{theorem}{Theorem}
\newtheorem{lemma}{Lemma}



\setlength{\textwidth}{17cm}
\addtolength{\textheight}{2.0cm}
\addtolength{\voffset}{-2.4cm}
\addtolength{\hoffset}{-2.0cm}







\newcommand{\sfe}{\mathsf{E}}
\newcommand{\sff}{\mathsf{F}}
\newcommand{\sfg}{\mathsf{G}}

\newcommand{\N}{\mathbb N} 
\newcommand{\R}{\mathbb R} 
\newcommand{\C}{\mathbb C} 



\newcommand{\fii}{\varphi} 
\newcommand{\Om}{\Omega} 

\newcommand{\hi}{\mathcal{H}} 
\newcommand{\li}{\mathcal{L}} 
\newcommand{\hd}{{\mathcal{H}_\oplus}} 

\renewcommand{\O}{\mathrm{Obs}} 
\newcommand{\CHI}[1]{\ensuremath{ \chi\raisebox{-1ex}{$\scriptstyle #1$} }}

\newcommand{\lh}{\mathcal{L(H)}} 
\newcommand{\sh}{\mathcal{S(H)}} 

\newcommand{\tr}[1]{\mathrm{tr}\left[#1\right]} 
\def\<{\langle} 
\def\>{\rangle} 
\newcommand{\ket}[1]{|#1\rangle} 
\newcommand{\kb}[2]{|#1 \rangle\langle #2|} 

\newcommand{\Eo}{\mathsf{E}} 
\newcommand{\Fo}{\mathsf{F}} 
\newcommand{\Qo}{\mathsf{Q}} 
\newcommand{\Po}{\mathsf{P}} 
\newcommand{\Mo}{\mathsf{M}} 
\newcommand{\No}{\mathsf{N}} 

\newcommand{\e}{{\bf h}}
\def\d{{\mathrm d}} 
\newcommand{\ov}{\overline} 









\begin{document}

\title{On coexistence and joint measurability of rank-1 quantum observables}

\author{Juha-Pekka Pellonp\"a\"a}
\email{juhpello@utu.fi}
\address{Turku Centre for Quantum Physics, Department of Physics and Astronomy, University of Turku, FI-20014 Turku, Finland}

\begin{abstract}
We show that a rank-1 quantum observable (POVM) $\Mo$ is jointly measurable with a quantum observable $\Mo'$ exactly when
$\Mo'$ is a smearing of $\Mo$.
If $\Mo$ is extreme, rank-1 and discrete then $\Mo$ and $\Mo'$ are coexistent if and only if they are jointly measurable.
\newline

\noindent
PACS numbers: 03.65.Ta, 03.67.--a
\end{abstract}

\maketitle


\thispagestyle{empty}

\section{Introduction}

For observables (POVMs) $\Mo$ and $\Mo'$ of a quantum system, the question arises under which conditions it is possible to collect the experimental data of measurements of $\Mo$ and $\Mo'$ from a measurement of a single observable $\No$ of the same system. For example, if $\Mo$ and $\Mo'$ can be measured together then their measurement outcome statistics can be obtained from the marginals of the joint measurement distribution of their joint
observable $\No$. In this case, $\Mo$ and $\Mo'$ are coexistent, that is, their ranges are contained in the range of $\No$.

To clarify the above definitions, let us consider the case of a discrete POVM $\Mo$ in a finite-dimensional Hilbert space $\hi$ (denote $d=\dim\hi<\infty$).
If $\Om=\{x_1,\,x_2\,\ldots,\,x_N\}$ is the outcome space of $\Mo$ then (by fixing a basis) $\hi\cong\C^d$ and $\Mo$ can be viewed as a collection $(\Mo_1,\,\Mo_2,\ldots,\,\Mo_N)$ of positive $d\times d$--matrices $\Mo_i$ such that $\sum_{i=1}^N\Mo_i=I$ (the identity matrix).
A state of the system is represented as a density matrix $\rho$, that is, a positive matrix of trace 1, and the number
$\tr{\rho\Mo_i}\in[0,1]$ is interpreted as the probability of getting an outcome $x_i$ when a measurement of $\Mo$ is performed and the system is in the state $\rho$.
Actually, $\Mo$ is a map which assigns to each subset $X$ of $\Om$ a positive matrix $\Mo(X)=\sum_{x_i\in X}\Mo_i$ so that
$\tr{\rho\Mo(X)}$ is the probability of getting an outcome belonging to the set $X$. Especially, $\Mo\big(\{x_i\}\big)=\Mo_i$.
Hence, the range of the POVM $\Mo$ (as a map) is the set 
\begin{eqnarray*}
{\rm ran}\,\Mo&=&\big\{\Mo(X)\,|\,X\subset\Om\big\}  \\
&=&\big\{\Mo_1,\,\Mo_2,\Mo_3,\ldots,\,\Mo_1+\Mo_2,\,\Mo_1+\Mo_3,\,\Mo_2+\Mo_3,\ldots,\,\Mo_1+\Mo_2+\Mo_2,\ldots\big\}.
\end{eqnarray*}
Two POVMs $\Mo=(\Mo_i)$ and $\Mo'=(\Mo'_j)$ (of the same Hilbert space) are jointly measurable if there exists a POVM $\No=(\No_{ij})$ such that
$\Mo$ and $\Mo'$ are the marginals of $\No$, that is,
\begin{equation}\label{ghy}
\Mo_i=\sum_j \No_{ij}, \qquad \Mo'_j=\sum_i \No_{ij}.
\end{equation}
Immediately one sees that ${\rm ran}\,\Mo$ is a subset of the range of $\No$. Similarly,
${\rm ran}\,\Mo' \subseteq {\rm ran}\,\No$.
More generally, if the ranges of $\Mo$ and $\Mo'$ belong to the range of some POVM $\No$, i.e.\ ${\rm ran}\,\Mo\cup{\rm ran}\,\Mo'\subseteq {\rm ran}\,\No$
(but equations \eqref{ghy} do not necessarily hold) then $\Mo$ and $\Mo'$ are said to be coexistent.

The notion of the coexistence was introduced by Ludwig \cite{Ludwig} and studied by many authors, see, e.g.\
\cite{Lahti,BuKiLa,LaPu} and references therein. 
Recently Reeb {\it et al}.\ \cite{lyhyt} were able to show that {\it coexistence does not imply joint measurability.} More specifically,
they constructed two observables which are coexistent but cannot be measured together (in the sense of eq.\ \eqref{ghy} above).
None of these observables is rank-1 (a discrete POVM $\Mo$ is rank-1 if every $\Mo_i$ is a rank-1 matrix, i.e.\
the maximum number of linearly independent row (or column) vectors of $\Mo_i$ is 1
or, equivalently, $\Mo_i$ is of the form $\kb{d_i}{d_i}$ where $d_i\in\hi$).

Recall that rank-1 observables have many important properties \cite{HeWo,Part2,Pell,Pell'}. For example, their measurements can be seen as state preparation procedures and the measurements break entanglement completely between the system and its environment.

In this paper, we show that a rank-1 observable $\Mo$ is jointly measurable with an observable $\Mo'$ if and only if $\Mo'$ is a smearing (post-processing) of $\Mo$ (Theorem 1, Remark \ref{remmmm}). For example, a discrete POVM $\Mo'$ is a smearing of a discrete POVM $\Mo$ if there exists a probability (or stochastic or Markov) matrix $(p_{kj})$ such that $0\le p_{kj}\le 1$, $\sum_j p_{kj}=1$, and $\Mo_j'=\sum_k p_{kj}\Mo_k$. 
Furthermore, we show that if $\Mo$ is extreme (i.e.\ an extreme point of the convex set of all observables), discrete and rank-1 then the coexistence of $\Mo$ and $\Mo'$ implies their joint measurability (Theorem 2).
Finally, if the range of a discrete rank-1 observable $\Mo$ is contained in the range of an observable $\ov\Mo$, i.e.\ ${\rm ran}\,\Mo\subseteq {\rm ran}\,\ov\Mo$,
then $\Mo$ and $\ov\Mo$ are jointly measurable (Proposition \ref{proposition1}).

\section{Notations and basic results}

For any Hilbert space $\hi$ we let $\lh$ denote the set of bounded operators on $\hi$. The set of states (density operators i.e.\ positive operators of trace 1) is denoted by $\sh$ and $I_\hi$ is the identity operator of $\hi$.
Throughout this article, we let $\hi$ 
be a {separable}\footnote{That is, $\hi$ has a finite $(d<\infty)$ or countably infinite basis $(d=\infty)$. If $d<\infty$ then $\hi\cong\C^d$ and $\lh$ (resp.\ $\sh$) can be identified with the set of all $d\times d$--complex matrices (resp.\ density matrices).} (complex) nontrivial Hilbert space
and $(\Omega,\Sigma)$ be a measurable space (i.e.\ $\Sigma$ is a $\sigma$-algebra of subsets of a nonempty set $\Omega$).\footnote{Usually in physics, $\Omega$ is finite (or countably infinite) set or a manifold (e.g.\ $\R^n$) when
$\Sigma$ contains e.g.\ open sets (the Borel $\sigma$-algebra of a manifold).} 
In the discrete case, $\Om=\{x_1,\,x_2,\,\ldots\}$ and $\Sigma$ consists of {\it all} subsets of $\Om$.
We denote
$\N:=\{0,1,\ldots\}$ and $\N_\infty:=\N\cup\{\infty\}$.

Let $\O(\Sigma,\,\hi)$ be the  set of {quantum observables}, that is, 
 normalized positive operator (valued) measures (POVMs) $\Mo:\,\Sigma\to\lh$. 
Recall that a map $\Mo:\,\Sigma\to\lh$ is a POVM if and only if 
$X\mapsto\tr{\rho\Mo(X)}$ is a probability measure for all $\rho\in\sh$.
Moreover, $\tr{\rho\Mo(X)}$ is interpreted as the probability of getting an outcome $x$ which belong to $X\in\Sigma$ when a measurement of $\Mo\in\O(\Sigma,\,\hi)$ is performed and the system is in the state $\rho\in\sh$.
The range of a POVM $\Mo:\,\Sigma\to\lh$ is the set
$$
{\rm ran}\,\Mo = \big\{\Mo(X)\,|\,X\in\Sigma\big\}.
$$
Any $\Mo\in\O(\Sigma,\,\hi)$ is called 
a projection valued measure (PVM) or a sharp observable or a spectral measure if $\Mo(X)^2=\Mo(X)$ for all $X\in\Sigma$.

For any observables $\Mo^a,\,\Mo^b\in\O(\Sigma,\,\hi)$ and a number $\lambda\in[0,1]$ one can form the convex combination $\lambda\Mo^a+(1-\lambda)\Mo^b\in\O(\Sigma,\,\hi)$   
which corresponds to classical randomization between the two observables (or their measurements). Hence, the set $\O(\Sigma,\,\hi)$ is convex.
We say that a POVM $\Mo\in\O(\Sigma,\,\hi)$ is {\it extreme} if it is an extreme point of $\O(\Sigma,\,\hi)$, that is, if $\Mo=\lambda\Mo^a+(1-\lambda)\Mo^b$ implies that $\Mo^a=\Mo=\Mo^b$. In other words, extreme POVMs do not allow (nontrivial) convex decompositions and are free from the classical noise arising from this type of randomization.
It is easy to show that PVMs are extreme but there are other extreme POVMs too \cite{Pe11}.
Next we consider minimal Naimark dilations of observables. We start with a simple example of a discrete POVM.

\begin{example}
Assume that $d=\dim\hi<\infty$ and $\Mo=(\Mo_1,\ldots,\,\Mo_N)$ is a discrete POVM. Since each $\Mo_i$ is a positive non-zero $d\times d$--matrix (bounded by the identity matrix), we may write
$$
\Mo_i=\sum_{k=1}^{m_i}\lambda_{ik}\kb{\fii_{ik}}{\fii_{ik}}=\sum_{k=1}^{m_i}\kb{d_{ik}}{d_{ik}}
$$
where the eigenvectors $\fii_{ik}$, $k=1,\ldots,\,m_i$, form an orthonormal set, the eigenvalues $\lambda_{ik}$  are non-zero (and bounded by 1), and $d_{ik}:=\sqrt{\lambda_{ik}}\fii_{ik}$.
We say that $m_i\in\N$ is the multiplicity of the outcome $x_i$ or the rank of $\Mo_i$, and
$\Mo$ is of rank 1 if $m_i=1$ for all $i=1,\ldots,N$. Note that always $m_i\le d$.

Let then $\hd$ be a Hilbert space spanned by an orthonormal basis $e_{ik}$ where $i=1,\ldots,N$ and $k=1,\ldots,m_i$. Obviously, $\dim\hd=\sum_{i=1}^N m_i$.
Define a discrete PVM $\Po=(\Po_1,\ldots,\Po_N)$ of $\hd$ via
$$
\Po_i=\sum_{k=1}^{m_i}\kb{e_{ik}}{e_{ik}}
$$
so that $\hi_{m_i}=\Po_i\hd$ is spanned by the vectors $e_{ik}$, $k=1,\ldots,m_i$,
and we may write (the direct sum)
$$
\hd=\bigoplus_{i=1}^N\hi_{m_i}.
$$
Define a linear operator $J:\,\hi\to\hd$, 
$$
J=\sum_{i=1}^N\sum_{k=1}^{m_i}\kb{e_{ik}}{d_{ik}}
$$
for which
$
J^*\Po_iJ=\Mo_i.
$
Especially, $J^*J=J^*\sum_i\Po_i J=\sum_i\Mo_i=I_\hi$ showing that $J$ is an isometry.
Hence, $\big(\hd,J,\Po\big)$ is a Naimark dilation of $\Mo$. The dilation is minimal, that is,
the span of vectors $\Po_i J\phi$, $i=1,\ldots,N$, $\phi\in\hi$, is the whole $\hd$. Indeed, this follows immediately from equation $\psi=\sum_{i=1}^N\sum_{k=1}^{m_i}\<e_{ik}|\psi\>e_{ik}=
\sum_{i=1}^N\sum_{k=1}^{m_i}\<e_{ik}|\psi\>\lambda_{ik}^{-1}\Po_i J d_{ik}$ where $\psi\in\hd$.
It is well known that $\Mo$ is a PVM if and only if $J$ is unitary (i.e.\ $\{d_{ik}\}_{i,k}$ is an orthonormal basis of $\hi$). In this case one can identify $\hd$ with $\hi$ and $\Po$ with $\Mo$ e.g.\ by setting $e_{ik}=d_{ik}$.

Let then $\{h_n\}_{n=1}^d$ be an orthonormal basis of $\hi$ and define (orthonormal) {\it structure vectors} $\psi_n:=Jh_n\in\hd$ so that $\<e_{ik}|\psi_n\>=\<d_{ik}|h_n\>$ and
\begin{equation}\label{hhiew47}
\Mo(X)=\sum_{x_i\in X}\Mo_i=\sum_{n,m=1}^d\sum_{x_i\in X}\<h_n|\Mo_ih_m\>\kb{h_n}{h_m}
=\sum_{n,m=1}^d\sum_{x_i\in X}\<\psi_n(x_i)|\psi_m(x_i)\>\kb{h_n}{h_m}
\end{equation}
where $\psi_n(x_i)=\Po_i\psi_n=\Po_iJh_n$. Finally, an operator $E\in\li(\hd)$ is decomposable if it commutes with every $\Po_i$, that is, $E=\sum_{i=1}^N E(x_i)$ where $E(x_i):=\Po_iE\Po_i\in\li(\hi_{m_i})$. Now $\Po_i(E\psi)=E(x_i)\Po_i\psi$. 

It is easy to generalize the results of this (discrete) example to the case of an arbitrary (e.g.\ continuous) POVM; just replace the sums
$\sum_{x_i\in X}(\ldots)$ by integrals $\int_X(\ldots)\d\mu(x)$ (see, e.g.\ \cite{Pell'}). This will be done next.
\end{example}

In the rest of this article, we let $\Mo\in\O(\Sigma,\hi)$ be an observable and $\big(\hd,J,\Po\big)$
its minimal (diagonal) Naimark dilation \cite[Theorem 1]{Part1}. Here
$\mu:\,\Sigma\to[0,1]$ is a probability measure\footnote{Or any $\sigma$-finite positive measure on $\Sigma$ such that $\mu$ and $\Mo$ are mutually absolutely continuous, that is, for all $X\in\Sigma$, $\mu(X)=0$ if and only if $\Mo(X)=0$.}
which can always be chosen to be $X\mapsto\tr{\rho_0\Mo(X)}$ where $\rho_0\in\sh$ has only non-zero eigenvalues, and
$$
\hd=\int_\Omega^\oplus\hi_{m(x)}\d\mu(x)
$$
is a direct integral Hilbert space
 with $m(x)$-dimensional fibers (Hilbert spaces) $\hi_{m(x)}$; recall that $\hd$ consists of square integrable `wave functions' $\psi$ such that $\psi(x)\in\hi_{m(x)}$.
 The operator $J:\,\hi\to\hd$ is isometric, and
$$
\Mo(X)=J^*\Po(X) J,\qquad X\in\Sigma,
$$ 
where $\Po\in\O(\Sigma,\hd)$ is the canonical spectral measure (or the `position observable') of $\hd$, that is, for all 
$X\in\Sigma$ and $\psi\in\hd$,
$\Po(X)\psi=\CHI X\psi$ where $\CHI X$ is the characteristic function of $X$.
Moreover, the set of linear combinations of vectors $\Po(X) J\psi$, $X\in\Sigma$, $\psi\in\hi$, is dense in $\hd$.

We say that $m(x)\in\N_\infty$, $m(x)\le\dim\hi$,
is the {\it multiplicity} of the measurement outcome $x\in\Omega$ since $x$ can be viewed as a collection of $m(x)$ outcomes $(x,1),\,(x,2),\ldots$ of some `finer' observable (which can distinguish them) \cite{Pell}.
If $m(x)=1$ for ($\mu$-almost)\footnote{Recall that if one says that some condition holds `for $\mu$-almost all $x\in\Omega$' this means that the condition holds for all $x\in\Omega\setminus O$ where $O$ is a $\mu$-null set, i.e.\ a set of $\mu$-measure zero.
} all $x\in\Omega$ then 
$\Mo$ is of {\it rank 1,} that is, the outcomes of $\Mo$ are `nondegenerate' \cite[Section 4]{Part2}.
For any orthonormal (ON) basis $\e=\{h_n\}_{n=1}^{\dim\hi}$ of $\hi$, define {\it structure vectors} $\psi_n:=Jh_n$ of $\Mo$ so that one can write (weakly)
\begin{eqnarray}\label{eq1}
\Mo(X)=
\sum_{n,m=1}^{\dim\hi}\int_{X}\<\psi_n(x)|\psi_m(x)\>\d\mu(x)\kb{h_n}{h_m}.
\end{eqnarray}
(Compare this equation to \eqref{hhiew47} above; indeed, in the discrete case, $\mu$ is just a counting measure\footnote{A counting measure counts the number of the elements of a (sub)set, i.e.\ $\mu(X)=\# X$ (the number of the elements of the set $X$).} or a sum of Dirac deltas so that all integrals reduce to sums \cite{Pell'}.)
If $\Mo$ is rank-1 then the fibers $\hi_{m(x)}$ are just one-dimensional Hilbert spaces so that, without restricting generality, we may assume that $\hi_{m(x)}\equiv\C$ and thus $\hd=L^2(\mu)$, the space of square integrable wave functions $\psi:\,\Omega\to\C$. Now, for example, the inner product
$\<\psi_n(x)|\psi_m(x)\>$ of $\hi_{m(x)}$ in eq.\ \eqref{eq1} is just $\overline{\psi_n(x)}\psi_m(x)$ (the inner product of the 1-dimensional Hilbert space $\C$).

\begin{example}
Consider a Hilbert space $\hi=L^2(\R)$ spanned by the Hermite functions $h_n$, $n\in\N$. Denote briefly $\ket n=h_n$ and let $a=\sum_{n=0}^\infty\sqrt{n+1}\kb n{n+1}$ be the lowering operator. Let $|z\>=e^{-|z|^2/2}\sum_{n=0}^\infty z^n/\sqrt{n!}\,\ket n$,
$z\in\C$, be a coherent state.
The following rank-1 POVMs are physically relevant (see, eq.\ \eqref{eq1}):
\begin{itemize}
\item The spectral measure\footnote{That is, for any (suitable) wave function $\psi:\,\R\to\C$ one has
$(Q\psi)(x)=x\psi(x)$ and $\big(\Qo(X)\psi\big)(x)=\CHI X(x)\psi(x)$, i.e.\ $Q=\int_\R x\,\d\Qo(x)$.} 
$\Qo(X)=\sum_{n,m=0}^{\infty}\int_{X}\ov{h_n(x)}h_m(x)\d x\kb{n}{m}$
of the position operator $Q=2^{-1/2}(a+a^*)$; now $\Omega=\R$, $\d\mu(x)=\d x$ and $\psi_n(x)=h_n(x)$.
\item The spectral measure
$\No(\{n\})=\kb n n$ of the number operator $N=a^*a$; now $\Omega=\N$, $\mu$ is the counting measure (discrete case) and $\psi_n(x)=\delta_{xn}$ (Kronecker delta) where $x\in\N$.
\item The phase space observable (associated with the $Q$-function)
$\sfg(Z)=\int_Z|z\>\<z|\d^2 z/\pi=\sum_{n,m=0}^{\infty}\int_Z{\ov z}^n z^m/\sqrt{n!m!}\cdot\pi^{-1}e^{-|z|^2}\d^2z\kb{n}{m}$; now $\Omega=\C$, $\d\mu(z)=\pi^{-1}e^{-|z|^2}\d^2z$ and $\psi_n(z)=z^n/\sqrt{n!}$.
\item The canonical phase observable $\Phi(X)=\sum_{n,m=0}^{\infty}\int_{X}e^{i(n-m)\theta}(2\pi)^{-1}\d\theta\kb{n}{m}$; now $\Omega=[0,2\pi)$, $\d\mu(\theta)=(2\pi)^{-1}\d\theta$ and $\psi_n(\theta)=e^{-in\theta}$.
\end{itemize}
\end{example}

Recall that $E\in\li(\hd)$ is decomposable if there exists a (measurable) family of operators $E(x)\in\li(\hi_{m(x)})$, $x\in\Omega$, such that $\mu\text{-ess sup}_{x\in\Omega}\|E(x)\|<\infty$ and
$(E\psi)(x)=E(x)\psi(x)$ for all $\psi\in\hd$ and $\mu$-almost all $x\in\Omega$.

\begin{lemma}\label{lemma}
Let $\sfe:\,\Sigma\to\hi$ be a (possibly non-normalized) positive operator measure. Then $\sfe(X)\le\Mo(X)$ for all $X\in\Sigma$ if and only if there exists a (unique) $E\in\li(\hd)$, $0\le E\le I_\hd$, such that 
$[E,\Po(X)]=0$ and 
$\sfe(X)=J^*\Po(X) EJ$ for all $X\in\Sigma$
if and only if  there exists a (unique) $E\in\li(\hd)$, $0\le E\le I_\hd$,
which is decomposable, $E=\int_\Omega^\oplus E(x)\d\mu(x)$, and
$$
\Eo(X)=
\sum_{n,m=1}^{\dim\hi}\int_{X}\<\psi_n(x)|E(x)\psi_m(x)\>\d\mu(x)\kb{h_n}{h_m}
$$
for all $X\in\Sigma$.
\end{lemma}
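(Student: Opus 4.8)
The plan is to route everything through the minimal diagonal Naimark dilation $(\hd,J,\Po)$ and to prove the statement in the cyclic order $(1)\Rightarrow(2)\Rightarrow(3)\Rightarrow(2)\Rightarrow(1)$; concretely I would dispatch $(2)\Leftrightarrow(3)$ and $(2)\Rightarrow(1)$ first, since these are bookkeeping, and reserve the real work for $(1)\Rightarrow(2)$. For $(2)\Leftrightarrow(3)$, recall that $\Po(X)$ acts on $\hd=\int_\Omega^\oplus\hi_{m(x)}\,\d\mu(x)$ as multiplication by $\CHI X$; by the standard fact that the commutant of these diagonal operators is exactly the algebra of decomposable operators, the condition $[E,\Po(X)]=0$ for all $X$ is equivalent to $E=\int_\Omega^\oplus E(x)\,\d\mu(x)$, and $0\le E\le I_\hd$ is equivalent to $0\le E(x)\le I_{\hi_{m(x)}}$ for $\mu$-almost all $x$. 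Writing $\psi_n=Jh_n$ and using that $E$ acts fibrewise,
\[
\<h_n|J^*\Po(X)EJh_m\>=\<\psi_n|\Po(X)E\psi_m\>=\int_X\<\psi_n(x)|E(x)\psi_m(x)\>\,\d\mu(x),
\]
so $J^*\Po(X)EJ$ coincides with the operator displayed in $(3)$, giving $(2)\Leftrightarrow(3)$. For $(2)\Rightarrow(1)$, since $\Po(X)$ is a projection commuting with $E$ we have $\Po(X)E\Po(X)=\Po(X)E$ and $0\le\Po(X)E\Po(X)\le\Po(X)$; sandwiching with $J$ yields $0\le\sfe(X)=J^*\Po(X)E\Po(X)J\le J^*\Po(X)J=\Mo(X)$.

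The substance is $(1)\Rightarrow(2)$, a Radon--Nikodym construction. I would build $E$ through a sesquilinear form on the dense subspace $\mathcal D:=\lin\{\Po(X)J\psi\mid X\in\Sigma,\ \psi\in\hi\}$ (dense by minimality of the dilation). By additivity of $\Po$ and disjointification of the sets involved, every $v\in\mathcal D$ can be written as $v=\sum_j\Po(A_j)J\phi_j$ with the $A_j$ pairwise disjoint, and then
\[
q(v):=\sum_j\<\phi_j|\sfe(A_j)\phi_j\>,\qquad \no{v}^2=\sum_j\<\phi_j|\Mo(A_j)\phi_j\>,
\]
the second identity holding because the projections $\Po(A_j)$ have mutually orthogonal ranges. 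The hypothesis $0\le\sfe(A_j)\le\Mo(A_j)$ then gives $0\le q(v)\le\no{v}^2$ termwise. The same inequality also forces $q$ to be well defined: comparing two disjoint representations of one $v$ on their common atomic refinement, on each atom $A$ the discrepancy vector $\xi$ satisfies $\<\xi|\Mo(A)\xi\>=\no{\Po(A)J\xi}^2=0$, hence $\<\xi|\sfe(A)\xi\>=0$ by domination, so the value of $q(v)$ is independent of the representation.

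Polarizing $q$ produces a sesquilinear form bounded by $\no{\cdot}^2$, so Riesz representation furnishes a unique $E\in\li(\hd)$ with $0\le E\le I_\hd$ and $\<\Po(Y)J\phi|E\Po(Z)J\psi\>=\<\phi|\sfe(Y\cap Z)\psi\>$ on $\mathcal D$. Taking $Z=\Omega$ gives $J^*\Po(X)EJ=\sfe(X)$, and comparing $\<\Po(Y)J\phi|E\Po(X)\Po(Z)J\psi\>$ with $\<\Po(X)\Po(Y)J\phi|E\Po(Z)J\psi\>$ — both equal $\<\phi|\sfe(X\cap Y\cap Z)\psi\>$ — shows $[E,\Po(X)]=0$ on the dense set, hence everywhere, which is $(2)$. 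Uniqueness is immediate: if $E,E'$ both satisfy $(2)$ then $D:=E-E'$ commutes with $\Po$ and $J^*\Po(Z)DJ=0$ for all $Z$, so $\<\Po(Y)J\phi|D\Po(X)J\psi\>=\<\phi|J^*\Po(X\cap Y)DJ\psi\>=0$ on $\mathcal D$ and $D=0$. The main obstacle is exactly the well-definedness and boundedness of $q$: it is here that the setwise order $\sfe\le\Mo$ must be married to the orthogonality supplied by minimality of the Naimark dilation, after which the reduction to disjoint sets and the measurability of $x\mapsto E(x)$ are routine.
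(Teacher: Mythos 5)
Your proof is correct, but it takes a genuinely different route from the paper. The paper disposes of the hard direction by citation: it views $\sfe$ and $\Mo$ as completely positive maps from the Abelian von Neumann algebra $L^\infty(\mu)$ into $\lh$ and invokes a general Radon--Nikodym-type theorem for CP maps (Proposition 1 of \cite{HaHePe13}, alternatively Lemma 1 of \cite{Part3}), then appeals to the standard fact that an operator on $\hd$ commutes with the canonical spectral measure $\Po$ exactly when it is decomposable (Proposition 1 of \cite{Part1}) to pass between the second and third formulations. You instead prove the Radon--Nikodym step from scratch: you build the quadratic form $q$ on the minimal dense subspace $\lin\{\Po(X)J\psi\}$ via disjointified representations, use the domination $\sfe(A)\le\Mo(A)$ together with the orthogonality $\Po(A_j)\Po(A_k)=0$ both for the bound $0\le q\le\no{\cdot}^2$ and for well-definedness (your observation that $\<\xi|\Mo(A)\xi\>=0$ forces $\<\xi|\sfe(A)\xi\>=0$, hence $\sfe(A)\xi=0$ by positivity, kills the cross terms), and then recover $E$ by polarization and Riesz representation. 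This is in effect the classical Arveson-style proof of the cited CP-map theorem, specialized to the commutative domain; it buys self-containedness and makes visible exactly where minimality of the dilation and the setwise order enter, whereas the paper's citation buys brevity and hands you the result in the generality of arbitrary CP maps. Your bookkeeping steps (the commutant-equals-decomposables equivalence for $(2)\Leftrightarrow(3)$, and the sandwich $J^*\Po(X)E\Po(X)J\le J^*\Po(X)J$ for $(2)\Rightarrow(1)$) coincide with what the paper uses implicitly. Two small points you compress but which are routine to fill: polarization requires checking that $q$ satisfies the parallelogram law, which holds because any two elements of $\mathcal D$ admit representations over a common disjoint family; and the bound $|B(v,w)|\le\no{v}\,\no{w}$ for the polarized form follows from Cauchy--Schwarz for the positive semidefinite form $B$ rather than directly from $q\le\no{\cdot}^2$.
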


\begin{proof}
The first part of the lemma follows immediately from
\cite[Proposition 1]{HaHePe13} (or from \cite[Lemma 1]{Part3}) by noting that any POVM (of $\hi$) can be seen as a completely positive map from an Abelian von Neumann algebra to $\lh$. For example, in the case of $\Mo$, the von Neumann algebra is $L^\infty(\mu)$ (the $\mu$-essentially bounded functions $\Omega\to\C$).
Finally, it is well-known that any bounded operator on $\hd$ is decomposable if and only if it commutes with the canonical spectral measure $\Po$ (see, e.g.\ \cite[Proposition 1]{Part1}).
\end{proof}

\begin{remark}\label{rem}\rm
Note that $\Mo$ is a PVM\footnote{Hence, one can identify $\Mo$ with $\Po$, i.e.\ diagonalize $\Mo$. If $\Mo$ is the spectral measure of a self-adjoint operator $S$ then $\Po$ can be found by solving the eigenvalue equation of $S$.
Now $m(x)$ is the usual multiplicity of the eigenvalue $x\in\R$.} if and only if $J$ is unitary \cite[Theorem 1]{Part1}. In this case,
$\sfe(X)\le\Mo(X)$, $X\in\Sigma$, if and only if $\sfe(X)=J^*\Po(X)JJ^*EJ=\Mo(X)E'$
where $E':=J^*EJ=\sfe(\Omega)$, see Lemma \ref{lemma}.
Moreover, $[E',\Mo(X)]=0$, $X\in\Sigma$.
If $\sfe$ is normalized, i.e.\ $E'=I_\hi$, then $\sfe=\Mo$.
\end{remark}

\section{Jointly measurable observables}

Let $(\Omega,\Sigma)$ and $(\Omega',\Sigma')$ be measurable spaces and $\Mo\in\O(\Sigma,\hi)$
and $\Mo'\in\O(\Sigma',\hi)$. If $[\Mo(X),\Mo'(Y)]=0$ for all $X\in\Sigma$, $Y\in\Sigma'$, then
$\Mo$ and $\Mo'$ are said to {\it commute} (with each other).
Denote the product $\sigma$-algebra\footnote{That is, $\Sigma\otimes\Sigma'$ is the smallest $\sigma$-algebra over $\Om\times\Om'$ which contains the `rectangles' $X\times Y$ where $X\in\Sigma$ and $Y\in\Sigma'$.} 
of $\Sigma$ and $\Sigma'$ by $\Sigma\otimes\Sigma'$.
We say that $\Mo$ and $\Mo'$ are {\it jointly measurable} if there exists an $\No\in\O(\Sigma\otimes\Sigma',\hi)$ such that
$\No(X\times\Omega')=\Mo(X)$ for all $X\in\Sigma$
and 
$\No(\Omega\times Y)=\Mo'(Y)$ for all $Y\in\Sigma'$. In this case, $\No$ is called a {\it joint observable} of 
$\Mo$ and $\Mo'$. 
Physically, this means that the measurement outcome probabilities $\tr{\rho\Mo(X)}$ and $\tr{\rho\Mo'(Y)}$
can be obtained from a measurement of a single observable $\No$ in the state $\rho$ which gives all probabilities $\tr{\rho\No(X\times Y)}$.
Recall that jointly measurable observables need not commute (see, Remark \ref{remmmm}).

Let $(\hd,J,\Po)$ be a minimal (diagonal) Naimark dilation of $\Mo$ where, e.g.\
$\hd=\int_\Omega^\oplus\hi_{m(x)}\d\mu(x)$.
Following \cite[Section 4.2]{JePuVi} we say that $f:\,\Omega\times\Sigma'\to\R$ is
a {\it weak Markov kernel (with respect to $\mu$)} if
\begin{itemize}
\item[(i)] $\Omega\ni x\mapsto f(x,Y)\in\R$ is $\mu$-measurable for all $Y\in\Sigma'$,
\item[(ii)] for every $Y\in\Sigma'$, $0\le f(x,Y)\le 1$ for $\mu$-almost all $x\in\Omega$,
\item[(iii)] $f(x,\Omega')=1$ and $f(x,\emptyset)=0$ for $\mu$-almost all $x\in\Omega$,
\item[(iv)] if $\{Y_i\}_{i=1}^\infty\subseteq\Sigma'$ is a disjoint sequence (i.e.\ $Y_i\cap Y_j=\emptyset$, $i\ne j$) then
$
f(x,\cup_i Y_i)=\sum_i f(x,Y_i)
$ 
for $\mu$-almost all $x\in\Omega$.
\end{itemize}
If there exists a weak Markov kernel $f:\,\Omega\times\Sigma'\to\R$ such that
$\Mo'(Y)=\int_\Omega f(x,Y)\d\Mo(x)$ for all $Y\in\Sigma$ then
$\Mo'$ is a {\it smearing} or a {\it post-processing} of $\Mo$, or any measurement of $\Mo'$ is subordinate to a measurement of $\Mo$
\cite{Holevo}.
Note that one can interpret $f(x,Y)$ as a (classical) conditional probability which is the probability of the event $Y$ assuming that $x$ is obtained. If $\Mo'$ is a post-prosessing of $\Mo$ then the measurement outcome probabilities 
$\tr{\rho\Mo'(Y)}=\int_\Omega f(x,Y)\tr{\rho\Mo(\d x)}$, that is, they can be seen as a classical processing (integration) of  the probability distribution $\tr{\rho\Mo(\d x)}$ related to a measurement of $\Mo$ in the state $\rho$.

\begin{example}
In the case of discrete POVMs, $\Omega=\{x_1,\ldots,x_N\}$ and $\Omega'=\{y_1,\ldots,y_{N'}\}$. Moreover, $\Sigma$ (resp.\ $\Sigma'$) consists of all subsets of $\Omega$ (resp.\ $\Omega'$) and $\mu$ is the counting measure of $\Omega$. Let $f:\,\Omega\times\Sigma'\to\R$ be a weak Markov kernel and denote
$p_{kj}=f\big(x_k,\,\{y_j\}\big)\in[0,1]$. Now 
$$
\sum_{j=1}^{N'} p_{kj}=\sum_{y_j\in\Omega'}f\big(x_k,\,\{y_j\}\big)=f\big(x_k,\,\cup_{y_j\in\Omega'}\{y_j\}\big)
=f\big(x_k,\Omega'\big)
=1
$$ 
so that $(p_{kj})$ is a probability (or stochastic or Markov) matrix, see Introduction.
\end{example}

\begin{remark}\label{remmmm}
Let $f:\,\Omega\times\Sigma'\to\R$ be a weak Markov kernel with respect to $\mu$ (associated with $\Mo$).
Then $\Mo_f:\,\Sigma'\to\lh,\;Y\mapsto \int_\Omega f(x,Y)\d\Mo(x)$ is an observable (i.e.\ a smearing of $\Mo$).
Furthermore, $\Mo$ and $\Mo_f$ are jointly measurable, a joint observable $\No\in\O(\Sigma\otimes\Sigma',\hi)$ being defined by
$$
\No(X\times Y)=\int_X f(x,Y)\d\Mo(x), \qquad X\in\Sigma,\;Y\in\Sigma',
$$
if and only if, for each $\rho\in\sh$,
the probability bimeasure 
$$
\Sigma\times\Sigma'\ni (X,Y)\mapsto \int_X f(x,Y)\tr{\rho\Mo(\d x)}\in[0,1]
$$ 
extends to a probability measure on $\Sigma\otimes\Sigma'$.
For this, one needs additional conditions for $f$, or for the measurable spaces \cite[Section 6]{LaYl}.
If $f$ satisfies a slightly stronger condition
\begin{itemize}
\item[(iv)'] 
for each sequence $\{Y_i\}_{i=1}^\infty\subseteq\Sigma'$,
there exists $\mu$-null set $N\subset X$ such that,
for all $x\in\Omega\setminus N$ and for any disjoint subsequence $\{Y_{i_k}\}_{k=1}^\infty\subseteq \{Y_i\}_{i=1}^\infty$, 
$
f(x,\cup_k Y_{i_k})=\sum_k f(x,Y_{i_k}),
$ 
\end{itemize}
then $\Mo$ and $\Mo_f$ are jointly measurable  \cite[Proposition 1]{Tulcea}. Obviously,
if $Y\mapsto f(x,Y)$ is a probability measure for ($\mu$-almost) all $x\in\Omega$ then (iv)' holds (recall that, in this case, $f$ is called a {\it Markov kernel}).
For example, by choosing $\Sigma'=\Sigma$ and
 $f(x,Y)=\CHI Y(x)$ for all $x\in\Omega$ and $Y\in\Sigma$ then $\No(X\times Y)=\Mo(X\cap Y)$ and $\Mo'(Y)=\Mo(Y)$, $X,\,Y\in\Sigma$, that is, any observable is jointly measurable with itself even if it does not commute with itself.
Clearly, in physical applications, we may always assume that (iv)' holds. Hence, we have seen that (classical) post-processing $\Mo\mapsto \Mo_f$ can be viewed as a joint measurement of $\Mo$ and the smeared $\Mo_f$. 
\end{remark}

Suppose then that $\Mo$ and $\Mo'$ are jointly measurable, and let $\No$ be their joint observable.
Then, for all $X\in\Sigma$ and $Y\in\Sigma'$, $\No(X\times Y)\le \No(X\times\Omega')=\Mo(X)=J^*\Po(X)J$ so that,
by Lemma \ref{lemma}, 
$$
\No(X\times Y)=J^*\Po(X)\Fo(Y) J=\sum_{n,m=1}^{\dim\hi}\int_{X}\<\psi_n(x)|\Fo(x,Y)\psi_m(x)\>\d\mu(x)\kb{h_n}{h_m}
$$
where
$\Fo:\,\Sigma'\to\li(\hd)$ is a POVM which commutes with the canonical spectral measure $\Po$, that is, for each $Y\in\Sigma'$, the operator $\Fo(Y)$ is decomposable,
$$
\Fo(Y)=\int_\Omega^\oplus \Fo(x,Y)\d\mu(x).
$$
Note that $\Fo(x,Y)\in\li(\hi_{m(x)})$ can be chosen to be positive for all $Y\in\Sigma'$ and all $x\in\Omega$.
Moreover,
$$
\Mo'(Y)=J^*\Fo(Y)J=\sum_{n,m=1}^{\dim\hi}\int_{\Omega}\<\psi_n(x)|\Fo(x,Y)\psi_m(x)\>\d\mu(x)\kb{h_n}{h_m},\qquad Y\in\Sigma'.
$$
In addition, if $\Mo$ is rank-1 then $m(x)=1$, $\hi_{m(x)}\cong\C$, and $\hd$ is isomorphic to $L^2(\mu)$, the Hilbert space of the $\mu$-square integrable complex functions on $\Omega$.
In this case, $f(x,Y):=\Fo(x,Y)\in[0,1]$ and $\Fo(Y)\in\li\big(L^2(\mu)\big)$ is a multiplicative operator,
that is, $(\Fo(Y)\psi)(x)=f(x,Y)\psi(x)$ for all $\psi\in\hd$ and for $\mu$-almost all $x\in\Omega$.
Indeed, it is easy to check that $f:\,\Omega\times\Sigma'\to\C,\;(x,Y)\mapsto f(x,Y)$, is a weak Markov kernel with respect to $\mu$ and, since
$$
\No(X\times Y)=\int_X f(x,Y)\d\Mo(x),\qquad
\Mo'(Y)=\int_\Omega f(x,Y)\d\Mo(x),\qquad X\in\Sigma,\;Y\in\Sigma',
$$
$\Mo'$ is a smearing of $\Mo$. Hence, we have proved the following theorem:
\begin{theorem}
Let $\Mo$ be a rank-1 observable and $\Mo'$ any observable.
If $\Mo$ and $\Mo'$ are jointly measurable then $\Mo'$ is a smearing of $\Mo$.
\end{theorem}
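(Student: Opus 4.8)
The plan is to exploit the domination $\No(\,\cdot\times Y)\le\Mo$ together with Lemma~\ref{lemma} and the one-dimensionality of the fibers that rank-1 forces. Assuming $\Mo$ and $\Mo'$ are jointly measurable with joint observable $\No\in\O(\Sigma\otimes\Sigma',\hi)$, I would fix $Y\in\Sigma'$ and regard $\sfe_Y:=\No(\,\cdot\times Y):\Sigma\to\lh$ as a (non-normalized) positive operator measure. Since $\No(X\times Y)\le\No(X\times\Omega')=\Mo(X)$ for every $X\in\Sigma$, Lemma~\ref{lemma} applies and yields a unique decomposable operator $\Fo(Y)=\int_\Omega^\oplus\Fo(x,Y)\,\d\mu(x)\in\li(\hd)$ with $0\le\Fo(Y)\le I_\hd$, $[\Fo(Y),\Po(X)]=0$, and $\No(X\times Y)=J^*\Po(X)\Fo(Y)J$. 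Running this over all $Y$ produces a map $Y\mapsto\Fo(Y)$.

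Next I would verify that $\Fo$ is genuinely a POVM on $\hd$ commuting with $\Po$, with the uniqueness clause of Lemma~\ref{lemma} as the lever. Normalization $\Fo(\Omega')=I_\hd$ follows because $\No(\,\cdot\times\Omega')=\Mo=J^*\Po(\,\cdot\,)J=J^*\Po(\,\cdot\,)I_\hd J$, so uniqueness forces $\Fo(\Omega')=I_\hd$. For $\sigma$-additivity, a disjoint sequence $\{Y_i\}$ with union $Y$ gives $J^*\Po(X)\Fo(Y)J=\No(X\times Y)=\sum_i\No(X\times Y_i)=J^*\Po(X)\big(\sum_i\Fo(Y_i)\big)J$, and uniqueness again yields $\Fo(Y)=\sum_i\Fo(Y_i)$. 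Setting $Y=\Omega'$ in $\Mo'(Y)=\No(\Omega\times Y)=J^*\Fo(Y)J$ keeps the normalization of $\Mo'$ transparent.

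Now I would invoke the rank-1 hypothesis. Since $m(x)=1$ for $\mu$-almost all $x$, each fiber $\hi_{m(x)}$ is one-dimensional, $\hd\cong L^2(\mu)$, and every decomposable operator reduces to multiplication by a scalar function. Hence $\Fo(x,Y)$ is a number $f(x,Y)\in[0,1]$ and $(\Fo(Y)\psi)(x)=f(x,Y)\psi(x)$ for $\mu$-almost all $x$. I would then check directly that $f:\Omega\times\Sigma'\to\R$ satisfies conditions (i)--(iv) of a weak Markov kernel: (ii) is $0\le f(x,Y)\le 1$ a.e.\ from $0\le\Fo(Y)\le I_\hd$; (iii) from $\Fo(\Omega')=I_\hd$ and $\Fo(\emptyset)=0$; (iv) from the $\sigma$-additivity established above; and (i) because each $f(\,\cdot\,,Y)$ is the $L^\infty(\mu)$ symbol of $\Fo(Y)$. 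Finally $\Mo'(Y)=J^*\Fo(Y)J=\int_\Omega f(x,Y)\,\d\Mo(x)$, which is exactly the assertion that $\Mo'$ is a smearing of $\Mo$.

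The step I expect to require the most care is the passage from the operators $\Fo(\,\cdot\,,Y)$, obtained one $Y$ at a time as $\mu$-a.e.\ defined objects, to a single function $f(x,Y)$ that is a bona fide weak Markov kernel: the additivity in (iv) holds only $\mu$-almost everywhere with an exceptional null set depending on the chosen disjoint sequence, and the joint behaviour in $(x,Y)$ has to be tracked through the identification $\hd\cong L^2(\mu)$. This is also precisely where rank-1 is indispensable: for higher multiplicity $\Fo(x,Y)$ would be a genuine operator on the fiber $\hi_{m(x)}$ rather than a scalar, so the classical post-processing interpretation — and with it the smearing conclusion — would fail.
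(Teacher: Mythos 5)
Your proposal is correct and follows essentially the same route as the paper's own proof: the domination $\No(\,\cdot\times Y)\le\Mo$ combined with Lemma~\ref{lemma} produces the decomposable operators $\Fo(Y)$ commuting with $\Po$, the rank-1 hypothesis collapses the fibers to $\C$ so that $\Fo(x,Y)$ becomes a scalar weak Markov kernel $f(x,Y)$, and $\Mo'(Y)=J^*\Fo(Y)J=\int_\Omega f(x,Y)\,\d\Mo(x)$ gives the smearing. You merely make explicit some steps the paper compresses (the uniqueness-based verification that $Y\mapsto\Fo(Y)$ is a normalized $\sigma$-additive POVM, and the check of kernel conditions (i)--(iv), which the paper dismisses as ``easy to check''), which is consistent with, not different from, its argument.
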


Note that, in the general case, $\Fo(x,Y)$ is an operator valued `conditional probability' which operates on the `eigenspace' 
$\hi_{m(x)}$ of $x$. One could say that $\Fo(x,Y)$ is a `quantum (weak) Markov kernel' which reduces to a `classical' kernel $f(x,Y)$ when $\Mo$ is rank-1, i.e.\ when $\Mo$ has the `nondegenerate' outcomes $x$.

\begin{remark}\rm
Let $\Mo$ and $\Mo'$ be any jointly measurable POVMs, and let $\No$ be their joint observable as above.
If $\Mo$ is a PVM then $J$ is unitary and 
$$
\No(X\times Y)=J^*\Po(X)JJ^*\Fo(Y) J=\Mo(X)\Mo'(Y)=\Mo'(Y)\Mo(X),\qquad X\in\Sigma,\; Y\in\Sigma',
$$
that is, $\Mo'$ commutes with $\Mo$. If $\Mo$ is also rank-1 then $\Mo'$ commutes with itself \cite[Theorem 4.4]{JePuVi}.
If both $\Mo$ and $\Mo'$ are PVMs then also $\No$ is projection valued and, for all $Y\in\Sigma'$, $\sff(Y)$ is a projection, that is, $\sff(x,Y)$ is a projection of $\hi_{m(x)}$ for $\mu$-almost all $x\in\Omega$.
Hence, if $\Mo$ and $\Mo'$ are jointly measurable PVMs  
and $\Mo$ is rank-1 (i.e.\ $\hi_{m(x)}=\C$) then $\Mo'$ is a smearing of $\Mo$ given by a weak Markov kernel $f$
such that, for all $Y\in\Sigma'$, $f(x,Y)\in\{0,1\}$ for $\mu$-almost all $x\in\Omega$. In this case, the kernel is `sharp' in the sense that each conditional probability $f(x,Y)$ is either 1 or 0.
\end{remark}

\section{Coexistent observables}
Let $\Mo\in\O(\Sigma,\hi)$ and $\Mo'\in\O(\Sigma',\hi)$. We say that 
$\Mo$ and $\Mo'$ are {\it coexistent} if there exists a $\sigma$-algebra $\ov\Sigma$ over a set $\ov\Om$
and an observable $\ov\Mo:\,\ov\Sigma\to\lh$ such that the ranges of $\Mo$ and $\Mo'$ belong the range of $\ov\Mo$, ${\rm ran}\,\Mo\cup{\rm ran}\,\Mo'\subseteq {\rm ran}\,\ov\Mo$, that is, for any $X\in\Sigma$ and $Y\in\Sigma'$ there exists sets
$Z,\,Z'\in\ov\Sigma$ such that $\ov\Mo(Z)=\Mo(X)$ and $\ov\Mo(Z')=\Mo'(Y)$.
Physically, this means that (if the sets $Z$ and $Z'$ are known) one can obtain the probabilities $\tr{\rho\Mo(X)}$ and
$\tr{\rho\Mo'(Y)}$ from a measurement of $\ov\Mo$ in the state $\rho$. However, there does not necessarily exist a simple `rule' or `formula' from which one could find sets $Z$ and $Z'$ corresponding to $X$ and $Y$.
Obviously, if $\Mo$ and $\Mo'$ are jointly measurable\footnote{Now simply $Z=X\times\Om'$ and $Z'=\Om\times Y$.} 
then they are coexistent but the converse does not hold in general, see \cite[Proposition 1]{lyhyt}.

Suppose then that $\Mo$ is discrete and rank-1. Then, without restricting generality,
we may assume that the outcome space $(\Omega,\Sigma)$ of $\Mo$ is such that $\Omega$ is finite or countably infinite, i.e.\ $\Omega=\{x_1,x_2,\ldots\}$, $x_i\ne x_j$, $i\ne j$, and $\Sigma$ consists of all subsets of $\Omega$ (i.e.\ $\Sigma=2^\Omega$).
Moreover, for all $i\in\{1,2\ldots\}$, $i<\#\Omega+1$,
$$
\Mo\big(\{x_i\}\big)=|d_i\>\<d_i|\ne 0
$$
where vectors $d_i\in\hi$, $d_i\ne 0$, are such that (weakly)
$$
\sum_{i=1}^{\#\Omega}|d_i\>\<d_i|=\Mo(\Omega)=I_\hi.
$$
Note that $\Mo$ is a PVM if and only if the vectors $d_i$ constitute an orthonormal basis of $\hi$.
We declare that $x_i\in\Omega$ is equivalent with $x_j\in\Omega$, and denote $x_i\sim x_j$, if
there exists a $p>0$ such that $|d_i\>\<d_i|=p|d_j\>\<d_j|$. Clearly, $\sim$ is an equivalence relation.
Let $$
[x_i]:=\{x_j\in\Omega\,|\,x_j\sim x_i\}\subseteq\Omega
$$
be the equivalence class of $x_i\in\Omega$ so that $\Omega$ is the disjoint union of the equivalence classes $[x_i]$.
Let $\Omega/{\sim}$ be the quotient set of $\Omega$ by $\sim$.
Now, for all $x_i\in\Omega$,
$$
\Mo([x_i])=\sum_{j=1\atop x_j\sim x_i}^{\#\Omega}|d_j\>\<d_j|=p_i|d_i\>\<d_i|
$$
where $p_i>0$.

Let $\Mo'\in\O(\Sigma',\hi)$ be an arbitrary observable and assume that 
$\Mo$ and $\Mo'$ are coexistent, i.e.\  the ranges of $\Mo$ and $\Mo'$ belong the range of some observable $\ov\Mo\in\O(\ov\Sigma,\hi)$.
For any $[x_i]\in\Omega/{\sim}$, let $Z_{[x_i]}\in\ov\Sigma$ be such that
$\ov\Mo(Z_{[x_i]})=\Mo([x_i])$.
If $[x_i]\ne[x_j]$ then $\ov\Mo(Z_{[x_i]}\cap Z_{[x_j]})\le \ov\Mo(Z_{[x_i]})=p_i|d_i\>\<d_i|$ and
$\ov\Mo(Z_{[x_i]}\cap Z_{[x_j]})\le \ov\Mo(Z_{[x_j]})=p_j|d_j\>\<d_j|$
so that $\ov\Mo(Z_{[x_i]}\cap Z_{[x_j]})=\tilde p_i|d_i\>\<d_i|=\tilde p_j|d_j\>\<d_j|$
where $\tilde p_i,\,\tilde p_j\ge 0$.
If, e.g.\ $\tilde p_i\ne 0$, then $|d_i\>\<d_i|=(\tilde p_j/\tilde p_i)|d_j\>\<d_j|$ yielding a contradiction.
Hence, $\ov\Mo(Z_{[x_i]}\cap Z_{[x_j]})=0$ and 
$$
\ov\Mo\Big(\bigcup_{[x_i]\in\Om/\sim}Z_{[x_i]}\Big)=
\sum_{[x_i]\in\Om/\sim}\ov\Mo\big(Z_{[x_i]}\big)=
\sum_{[x_i]\in\Om/\sim}\Mo\big([x_i]\big)=I_\hi
$$
implying that, for all $Z\in\ov\Sigma$,
$$
\ov\Mo(Z)=\ov\Mo\Big(Z\cap\bigcup_{[x_i]\in\Om/\sim}Z_{[x_i]}\Big)=
\sum_{[x_i]\in\Om/\sim}\ov\Mo\big(Z\cap Z_{[x_i]}\big).
$$
Since $\ov\Mo\big(Z\cap Z_{[x_i]}\big)\le\ov\Mo\big(Z_{[x_i]}\big)=p_i|d_i\>\<d_i|$,
$$
\ov\Mo\big(Z\cap Z_{[x_i]}\big)=p([x_i],Z)p_i|d_i\>\<d_i|=p([x_i],Z)\Mo\big([x_i]\big)
$$
where $p([x_i],Z)\in[0,1]$. Now each mapping $Z\mapsto \ov\Mo\big(Z\cap Z_{[x_i]}\big)$ is $\sigma$-additive and, thus, $Z\mapsto p([x_i],Z)$ is a probability measure for any $[x_i]\in\Omega/{\sim}$.
Define a mapping $f:\,\Omega\times\ov\Sigma\to[0,1]$, $(x_i,Z)\mapsto f(x_i,Z):=p([x_i],Z)$.
It is easy to check that $f$ is a Markov kernel with respect to the counting measure\footnote{That is, $\#X$ is the number of the elements of $X\subseteq\Om$.} $\#:\,2^\Omega\to[0,\infty]$ and, for all $Z\in\ov\Sigma$,
$$
\ov\Mo(Z)=\sum_{[x_i]\in\Om/\sim}p([x_i],Z)\Mo\big([x_i]\big)
=\sum_{[x_i]\in\Om/\sim}p([x_i],Z)\sum_{j=1\atop x_j\sim x_i}^{\#\Omega}|d_j\>\<d_j|
=\sum_{k=1}^{\#\Omega}f(x_k,Z)|d_k\>\<d_k|
$$
so that $\ov\Mo$ is a smearing of $\Mo$. Hence, we have:
\begin{proposition}\label{proposition1}
If the range of a discrete rank-1 observable $\Mo$ is contained in the range of an observable $\ov\Mo$,
then $\Mo$ and $\ov\Mo$ are jointly measurable.
\end{proposition}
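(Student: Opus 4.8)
The plan is to reduce joint measurability to the smearing relation: I will show that the single hypothesis ${\rm ran}\,\Mo\subseteq{\rm ran}\,\ov\Mo$ already forces $\ov\Mo$ to be a smearing of $\Mo$ given by a genuine Markov kernel, and then invoke Remark \ref{remmmm} (whose condition (iv)' is automatic for a Markov kernel) to obtain a joint observable. Thus the whole problem is to manufacture, from the range inclusion alone, a family $f(x_k,\cdot)$ of probability measures on $\ov\Sigma$ with $\ov\Mo(Z)=\sum_k f(x_k,Z)\,|d_k\>\<d_k|$.

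The engine of the argument is an elementary rigidity property of rank-1 operators: if $0\le A\le c\,|d\>\<d|$ for a positive $A\in\lh$ and a nonzero $d\in\hi$, then $A=t\,|d\>\<d|$ for some $t\in[0,c]$. Indeed, testing the inequality against any $\psi\perp d$ gives $\<\psi|A\psi\>=0$, so positivity forces $A\psi=0$; hence $A$ is supported on the line $\C d$ and is a scalar multiple of $|d\>\<d|$. This is exactly the step where rank-1 is indispensable, and it is what I expect to be the conceptual crux of the proof.

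Before applying the rigidity I must neutralize the harmless ambiguity caused by \emph{proportional} effects. I group the outcomes by the equivalence relation $x_i\sim x_j\iff|d_i\>\<d_i|=p\,|d_j\>\<d_j|$ for some $p>0$, so that distinct classes correspond to distinct rays in $\hi$, and write $\Mo([x_i])=p_i\,|d_i\>\<d_i|$. For each class I use ${\rm ran}\,\Mo\subseteq{\rm ran}\,\ov\Mo$ to pick $Z_{[x_i]}\in\ov\Sigma$ with $\ov\Mo(Z_{[x_i]})=\Mo([x_i])$. The rigidity lemma then yields the key disjointness: for $[x_i]\ne[x_j]$ the operator $\ov\Mo(Z_{[x_i]}\cap Z_{[x_j]})$ is dominated by both $p_i|d_i\>\<d_i|$ and $p_j|d_j\>\<d_j|$, hence is simultaneously a multiple of two non-proportional rank-1 operators and must vanish. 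Consequently $\ov\Mo(\cup_{[x_i]}Z_{[x_i]})=\sum_{[x_i]}\Mo([x_i])=I_\hi$, so the sets $Z_{[x_i]}$ partition $\ov\Mo$ up to $\ov\Mo$-null sets.

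With the partition in hand the kernel assembles itself. For any $Z\in\ov\Sigma$ I split $\ov\Mo(Z)=\sum_{[x_i]}\ov\Mo(Z\cap Z_{[x_i]})$, and since $\ov\Mo(Z\cap Z_{[x_i]})\le\ov\Mo(Z_{[x_i]})=p_i|d_i\>\<d_i|$, the rigidity lemma gives $\ov\Mo(Z\cap Z_{[x_i]})=p([x_i],Z)\,\Mo([x_i])$ with $p([x_i],Z)\in[0,1]$. The $\sigma$-additivity of $\ov\Mo$ transfers to each scalar map $Z\mapsto p([x_i],Z)$, making it a probability measure, so $f(x_k,Z):=p([x_k],Z)$ is a Markov kernel with respect to the counting measure. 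Summing within each class recovers $\ov\Mo(Z)=\sum_k f(x_k,Z)\,|d_k\>\<d_k|=\int_\Omega f(x,Z)\,\d\Mo(x)$, exhibiting $\ov\Mo$ as a smearing of $\Mo$, and Remark \ref{remmmm} then delivers the joint observable $\No(X\times Z)=\int_X f(x,Z)\,\d\Mo(x)$. The only points requiring care beyond the rigidity lemma are routine: the $\sigma$-additivity of the scalar measures and the observation that discarding $\ov\Mo$-null overlaps leaves all the identities intact.
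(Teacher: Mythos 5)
Your proof is correct and follows essentially the same route as the paper's: the same equivalence classes of proportional effects, the same choice of sets $Z_{[x_i]}$ with pairwise $\ov\Mo$-null overlaps, the same scalar kernel $p([x_i],Z)$ assembled into a Markov kernel, and the same appeal to Remark \ref{remmmm} via condition (iv)'. The only difference is presentational: you isolate and prove the rigidity fact that $0\le A\le c\,|d\>\<d|$ forces $A=t\,|d\>\<d|$, which the paper uses implicitly, and you explicitly flag the disjointification of the $Z_{[x_i]}$ up to $\ov\Mo$-null sets---both welcome clarifications, not a new argument.
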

For each $Y\in\Sigma'$, let $Z_Y\in\ov\Sigma$ be such that $\ov\Mo(Z_Y)=\Mo'(Y)$.
Then
\begin{equation}\label{equ}
\Mo'(Y)=\sum_{[x_i]\in\Om/\sim}p([x_i],Z_Y)\Mo\big([x_i]\big)=\sum_{k=1}^{\#\Omega}f'(x_k,Y)|d_k\>\<d_k|
\end{equation}
where $f':\,\Omega\times\Sigma'\to[0,1]$ is defined by $f'(x_i,Y):=p([x_i],Z_Y)$.
To show that $f'$ is a Markov kernel (with respect to the counting measure $\#:\,2^\Omega\to[0,\infty]$) one is left to check the $\sigma$-additivity of the mappings $Y\mapsto p([x_i],Z_Y)$.

Assume that the operators $\Mo([x_i])=p_i|d_i\>\<d_i|$, $[x_i]\in\Omega/{\sim}$, are linearly independent in the sense that,
for any (norm) bounded set of real numbers $r_{[x_i]}$, $[x_i]\in\Omega/{\sim}$, the condition
$\sum_{[x_i]\in\Om/\sim}r_{[x_i]}\Mo\big([x_i]\big)=0$ implies $r_{[x_i]}=0$ for all $[x_i]\in\Omega/{\sim}$.
Then, for any disjoint sequence $\{Y_j\}_{j=1}^\infty\subseteq\Sigma'$,
$$
0=\Mo'\big(\cup_{j=1}^\infty Y_j\big)-\sum_{j=1}^\infty\Mo'(Y_j)=\sum_{[x_i]\in\Om/\sim}\Big\{p([x_i],Z_{\cup_j Y_j})-\sum_{j=1}^\infty p([x_i],Z_{Y_j})\Big\}\Mo\big([x_i]\big)
$$
implying that $f'(x_i,\cup_j Y_j)=p([x_i],Z_{\cup_j Y_j})=\sum_{j=1}^\infty p([x_i],Z_{Y_j})=\sum_{j=1}^\infty f'(x_i,Y_j)$
and $Y\mapsto f'(x_i,Y)$ is a probability measure for all $x_i\in\Omega$.
Since $f'$ is a Markov kernel, $\Mo'$ is a smearing of $\Mo$ showing that $\Mo$ and $\Mo'$ are jointly measurable.

Note that $\Mo^\sim:\,2^{\Omega/\sim}\to\lh,\;X'\mapsto \sum_{[x_i]\in X'}\Mo([x_i])$ is a rank-1 observable, a {\it relabeling} of $\Mo$ \cite{HaHePe12}.
If $\Mo^\sim$ is an extreme point of $\O(2^{\Omega/\sim},\hi)$ then the effects $\Mo([x_i])$ are linearly independent in the above sense \cite{HaHePe12} and
$\Mo$ and $\Mo'$ are jointly measurable. 
Especially, if $\Mo$ is extreme in $\O(2^{\Omega},\hi)$ then the operators $\Mo(\{x_i\})$, $i<\#\Omega+1$, are linearly independent.
Moreover, $[x_i]=\{x_i\}$ for all $i<\#\Omega+1$. Hence, we have:

\begin{theorem}
Let $\Mo$ be an extreme rank-1 discrete observable and $\Mo'$ any observable.
Then $\Mo$ and $\Mo'$ are coexistent if and only if they are jointly measurable if and only if $\Mo'$ is a smearing of $\Mo$.
\end{theorem}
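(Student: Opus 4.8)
The plan is to prove the three-way equivalence by closing a short cycle, most of whose links are already established. Joint measurability trivially implies coexistence, since for a joint observable $\No$ one may take $Z=X\times\Om'$ and $Z'=\Om\times Y$. Since $\Mo$ is rank-1, Theorem 1 gives that joint measurability implies $\Mo'$ is a smearing of $\Mo$. Conversely, once $\Mo'$ is exhibited as a smearing of $\Mo$ by a genuine Markov kernel $f'$ (each $Y\mapsto f'(x_i,Y)$ a probability measure), Remark \ref{remmmm} applies: condition (iv)' holds automatically for a Markov kernel, so $\Mo$ and $\Mo_{f'}=\Mo'$ are jointly measurable. Thus the only substantive step is to show that coexistence of an extreme rank-1 discrete $\Mo$ with $\Mo'$ forces $\Mo'$ to be a smearing of $\Mo$.

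For this I would follow the construction already carried out in this section. Assuming coexistence through a dominating observable $\ov\Mo$, I first invoke extremality of $\Mo$ to obtain the two facts I will use repeatedly: each equivalence class is a singleton, $[x_i]=\{x_i\}$, and the effects $\Mo(\{x_i\})=\kb{d_i}{d_i}$ are linearly independent in the bounded sense \cite{HaHePe12}. Choosing for each $Y\in\Sigma'$ a set $Z_Y\in\ov\Sigma$ with $\ov\Mo(Z_Y)=\Mo'(Y)$ and setting $f'(x_i,Y):=p([x_i],Z_Y)\in[0,1]$, the representation \eqref{equ},
$$
\Mo'(Y)=\sum_{k=1}^{\#\Omega}f'(x_k,Y)\kb{d_k}{d_k},
$$
is in place, and it remains to verify that $f'$ is a Markov kernel with respect to the counting measure.

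The heart of the matter --- and the step I expect to be the main obstacle --- is the $\sigma$-additivity of $Y\mapsto f'(x_i,Y)$, which is not visible from the pointwise construction because $Z_{\cup_j Y_j}$ need not equal $\bigcup_j Z_{Y_j}$. Here linear independence is indispensable: for any disjoint sequence $\{Y_j\}_{j=1}^\infty\subseteq\Sigma'$ one expands
$$
0=\Mo'\Big(\bigcup_j Y_j\Big)-\sum_j\Mo'(Y_j)=\sum_i\Big\{f'\Big(x_i,\bigcup_j Y_j\Big)-\sum_j f'(x_i,Y_j)\Big\}\kb{d_i}{d_i},
$$
and linear independence forces each bracketed coefficient to vanish, yielding $\sigma$-additivity; the same device applied to $\ov\Mo(Z_{\Om'})=I_\hi$ and $\ov\Mo(Z_\emptyset)=0$ gives the normalizations $f'(x_i,\Om')=1$ and $f'(x_i,\emptyset)=0$. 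Hence each $Y\mapsto f'(x_i,Y)$ is a probability measure, $f'$ is a Markov kernel, and $\Mo'$ is a smearing of $\Mo$. Together with the easy implications of the first paragraph, this closes the cycle and proves that coexistence, joint measurability, and being a smearing of $\Mo$ are equivalent.
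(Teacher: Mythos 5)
Your proposal is correct and takes essentially the same route as the paper's own proof: the easy links are closed via Theorem 1 and Remark \ref{remmmm} exactly as in the text, and the substantive step (coexistence implies smearing) reproduces the paper's construction, including the decisive use of linear independence of the effects $\Mo(\{x_i\})$ --- guaranteed by extremality via \cite{HaHePe12}, together with $[x_i]=\{x_i\}$ --- to extract $\sigma$-additivity of $Y\mapsto f'(x_i,Y)$ from the operator identity $\Mo'\big(\cup_j Y_j\big)=\sum_j\Mo'(Y_j)$. You correctly single out this $\sigma$-additivity (the failure of $Z_{\cup_j Y_j}$ to equal $\cup_j Z_{Y_j}$) as the crux, which is precisely where the paper's argument and its extremality hypothesis do their work.
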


The next examples demonstrate that the extremality requirement (or the linear independence) is needed in the above proof.

\begin{example}
Consider a two-dimensional Hilbert space $\C^2$ and fix an orthonormal basis $\ket0:=(1,0)$ and 
$\ket1:=(0,1)$. Define unitary operators $U_k=\kb00+i^k\kb11$, $k=1,2,3,4$.
Let $|d\>:=\frac12(\ket0+\ket1)$ and $\Mo_k:=U_k|d\>\<d|U_k^*$. 
Now $\{\Mo_k\}_{k=1}^4$ is linearly dependent set (with exactly 3 linearly independent operators) and
$
(2-\epsilon)\Mo_1+\epsilon\Mo_2+(2-\epsilon)\Mo_3+\epsilon\Mo_4=I_{\C^2}
$
for all $\epsilon\in\R$.
Clearly, $\big(\Mo_1,\Mo_2,\Mo_3,\Mo_4\big)$ constitute a rank-1 POVM $\Mo$.
Define a 2-outcome rank-2 POVM
$\Mo'=\big(\Mo'_1,\Mo'_2\big)=\big(\frac12 I_{\C^2},\frac12 I_{\C^2}\big)$
so that the ranges of $\Mo$ and $\Mo'$ belong to the range of $\ov\Mo=\Mo$, that is,
$\Mo$ and $\Mo'$ are coexistent. Now equation  \eqref{equ} reduces to $\Mo'_1=\sum_{k=1}^4 p_{k1}\Mo_k$ and $\Mo'_2=\sum_{k=1}^4 p_{k2}\Mo_k$ where the coefficients $p_{kj}\in[0,1]$ are not unique.
Indeed, one can write
$\Mo'_1=\Mo_1+\Mo_3$ and $\Mo'_2=\Mo_2+\Mo_4$ showing that 
{\it $\Mo$ and $\Mo'$ are jointly measurable,} i.e.\ $\Mo'$ is a smearing of $\Mo$ by a Markov kernel $(p_{kj})$ whose non-zero elements $p_{11}$, $p_{31}$, $p_{22}$ and $p_{42}$ are equal to one.
However, if one writes $\Mo'_1=\Mo_1+\Mo_3$ and $\Mo'_2=\Mo_1+\Mo_3$ then, e.g.,
$p_{11}=1$ and $p_{12}=1$. In this case, $p_{11}+p_{12}=2\ne1$ so that $(p_{kj})$ is not a Markov kernel.
\end{example}

\begin{example}
Let $\hi=\C^3$ and let $\{|1\>,\,|2\>,\,|3\>\}$ be its orthonormal basis.
Define orthonormal unit vectors 
$\psi_1:=\big(|1\>+|2\>+|3\>\big)/\sqrt3$, 
$\psi_2:=\big(|1\>+\alpha|2\>+\alpha^2|3\>\big)/\sqrt3$ 
and
$\psi_3:=\big(|1\>+\alpha^2|2\>+\alpha|3\>\big)/\sqrt3$ where
$\alpha:=\exp(2\pi i/3)$ (so that $1+\alpha+\alpha^2=0$ and $\alpha^3=1$).
Define a 6-outcome rank-1 POVM\footnote{Note that $\Mo$ is not extreme since 
$\Mo_1+\Mo_2+\Mo_3=\frac12I_{\C^3}=\Mo_4+\Mo_5+\Mo_6$. Also, $\Mo'$ is not extreme.} 
$$
\Mo=\big(\Mo_1,\Mo_2,\ldots,\Mo_6\big)=\left(\frac12\kb11,\,\frac12\kb22,\,\frac12\kb33,\,
\frac12\kb{\psi_1}{\psi_1},\,\frac12\kb{\psi_2}{\psi_2},\,\frac12\kb{\psi_3}{\psi_3}
\right)
$$
and a 3-outcome rank-2 POVM
$$
\Mo'=\big(\Mo'_1,\Mo'_2,\Mo'_3\big)=\left(
\frac12\kb22+\frac12\kb33,\,
\frac12\kb11+\frac12\kb33,\,
\frac12\kb11+\frac12\kb22
\right).
$$
Since the ranges of $\Mo$ and $\Mo'$ belong to the range of $\Mo$ ($=\ov\Mo$) the observables $\Mo$ and $\Mo'$ are coexistent.
If $\Mo$ and $\Mo'$ are jointly measurable then $\Mo'$ must be a smearing of $\Mo$ (Theorem 1) and, hence, 
of the form
\begin{equation}\label{djwhfvnds}
\Mo'_j=\sum_{k=1}^6 p_{kj}\Mo_k,\qquad j=1,\,2,\,3,
\end{equation}
where $p_{kj}\in[0,1]$ and $\sum_{j=1}^3 p_{kj}=1$. But
$$
0=\<j|\Mo'_j|j\>=\sum_{k=1}^6 \underbrace{p_{kj}\<j|\Mo_k|j\>}_{\ge\;0},\qquad j=1,\,2,\,3,
$$
implying that $p_{kj}\<j|\Mo_k|j\>\equiv 0$. Since, for all $j=1,2,3$ and $k=4,5,6$,
$\<j|\Mo_k|j\>=1/6$ (and thus $p_{kj}=0$) equation \eqref{djwhfvnds} reduces to
$\Mo'_j=\sum_{k=1}^3 p_{kj}\Mo_k$.
Now $\Mo_1'=\Mo_2+\Mo_3$, $\Mo_2'=\Mo_1+\Mo_3$, $\Mo_3'=\Mo_1+\Mo_2$, and the operators $\Mo_k$, $k=1,2,3$, are linearly independent so that one must have
$p_{11}=0$, $p_{12}=1$ and $p_{13}=1$ yielding a contradiction 
$\sum_{j=1}^3p_{1j}=2\ne 1$.
Hence, {\it $\Mo$ and $\Mo'$ are not jointly measurable.}
\end{example}

\begin{remark}
In literature, there exist two significant classes of observables for which coexistence and joint measurability are known to be equivalent (see, e.g.\ \cite{BuKiLa}):
Let $\Mo\in\O(\Sigma,\hi)$ and $\Mo'\in\O(\Sigma',\hi)$ be coexistent and $\ov\Mo\in\O(\ov\Sigma,\hi)$ be such that the ranges of $\Mo$ and $\Mo'$ belong to the range of $\ov\Mo$. Then $\Mo$ and $\Mo'$ are jointly measurable if\footnote{Moreover, one must assume that the measurable spaces $(\Om,\Sigma)$, $(\Om',\Sigma')$ and $(\ov\Om,\ov\Sigma)$ are regular enough, e.g.\ locally compact metrizable and separable topological spaces equipped with their Borel $\sigma$-algebras.}
\begin{enumerate}
\item $\Mo$ (or $\Mo'$) is projection valued, or
\item $\ov\Mo$ is regular, that is, for any $Z\in\ov\Sigma$ such that $0\ne\ov\Mo(Z)\ne I_\hi$ one has
$\ov\Mo(Z)\not\le\frac12 I_\hi$ and $\ov\Mo(Z)\not\ge\frac12 I_\hi$.
\end{enumerate}
Note that  (2) implies that also $\Mo$ and $\Mo'$ are regular but (2) does not imply (1).
In both cases, one need not assume that $\Mo$ (or $\Mo'$) is discrete or rank-1. However, if $\Mo$ is projection valued then it is automatically extreme. Moreover, any rank-1 effect is of the form $p|d\>\<d|$ where $d\in\hi$ is a unit vector and $p\in(0,1]$. It is regular\footnote{If $\dim\hi>1$.} (respectively, a projection) if and only if $p>\frac12$ (resp.\ $p=1$).

Suppose then that $\{|0\>,\,|1\>\}$ is an orthonormal basis of $\hi=\C^2$ and
$\Mo=\big(\Mo_1,\Mo_2,\Mo_3\big)=\big(\kb00,\,0.1\kb11,\,0.9\kb11\big)$ which is not projection valued or regular. Since $\Mo$ has a projection valued (especially, extreme) relabeling 
$\Mo^\sim=\big(\kb00,\,\kb11\big)$ it follows that $\Mo$ and an {\it arbitrary} $\Mo'$ are jointly measurable if and only if they are coexistent.

\end{remark}

\section{Discussion}

It is shown in \cite{Part2} that any observable $\Mo$ can be maximally refined into a rank-1 observable $\Mo_1$ whose value space `contains' also the multiplicities of the measurement outcomes of $\Mo$. We called a measurement of $\Mo_1$ as a {\it complete} measurement of $\Mo$ since (a) it gives information on the multiplicities of the outcomes, (b) it can be seen as a preparation of a new measurement, and (c) it breaks entanglement between the system and its environment \cite{Pell,Pell'}.
Moreover, $\Mo_1$ can be measured by performing a sequential measurement of $\Mo$
and some discrete `multiplicity' observable \cite{Pell}.

Assume then that we measure any observable $\Mo'$ after a measurement of $\Mo_1$.
Since each sequential measurement can be seen as a joint measurement (of $\Mo_1$ and a disturbed `version' $\Mo''$ of $\Mo'$) \cite{LaYl}, it follows from Theorem 1 that $\Mo''$ is a post-processing of $\Mo_1$.
Moreover, the joint observable associated with the sequential measurement is determined by a weak Markov kernel with respect to $\Mo_1$ and the joint measurement can be interpreted as a processing of data obtained from the first measurement (of $\Mo_1$).
Hence, after a complete measurement there is no need to perform any extra measurements.

\noindent {\bf Acknowledgments.} The author thanks Teiko Heinosaari and Roope Uola for useful discussions and comments on the manuscript. This work was supported by the Academy of Finland grant no 138135.



\begin{thebibliography}{99}

\bibitem{BuKiLa}
P.\ Busch, J.\ Kiukas, and P. Lahti, 
"On the notion of coexistence in quantum mechanics", 
Math.\ Slovaca {\bf 60}, 665 (2010). 

\bibitem{Tulcea}M.\ P.\ Ershov, "On a generalization of the Ionescu Tulcea construction of a measure by transition kernels", Lecture Notes in Mathematics {\bf 945}, 29 (1982).

\bibitem{HaHePe12}
E.\ Haapasalo, T.\ Heinosaari, and Juha-Pekka Pellonp\"a\"a,
"Quantum measurements on finite dimensional systems: relabeling and mixing",
Quantum Inf.\ Process.\ {\bf 11}, 1751 (2012). 

\bibitem{HaHePe13}
E.\ Haapasalo, T.\ Heinosaari, and Juha-Pekka Pellonp\"a\"a,
"When do pieces determine the whole? Extremal marginals of a completely positive map",
{\tt arXiv:1209.5933}.

\bibitem{HeWo}
T.\ Heinosaari and M.\ M.\ Wolf,
"Non-disturbing quantum measurements,"
J.\ Math.\ Phys.\ {\bf 51}, 092201 (2010).

\bibitem{Holevo}A.\ S.\ Holevo, 
"Statistical definition of observable and the structure of statistical models,"
Rep.\ Math.\ Phys.\ {\bf 22}, 385 (1985).

\bibitem{JePuVi}
A.\ Jen\v cov\'a, S.\ Pulmannov\'a, and E.\ Vincekov\'a,
"Sharp and fuzzy observables on effect algebras", 
Int.\ J.\ Theor.\ Phys.\  {\bf 47}, 125 (2008).

\bibitem{Lahti}
P.\ Lahti, "Coexistence and joint measurability in quantum mechanics", 
Int.\ J.\ Theor.\ Phys.\ {\bf 42}, 893 (2003).

\bibitem{LaPu}
P.\ Lahti and S.\ Pulmannov\'a, "Coexistent observables and effects in quantum mechanics", 
Rep.\ Math.\ Phys.\ {\bf 39}, 339 (1997). 

\bibitem{LaYl} P.\ Lahti and K.\ Ylinen,
"Dilations of positive operator measures and bimeasures related to quantum mechanics,"
Math.\ Slovaca {\bf 54}, 169 (2004).

\bibitem{Ludwig} 
G.\ Ludwig, {\it Foundations of Quantum Mechanics. Volume I.} (Springer-Verlag, Berlin, 1983).

\bibitem{Pe11} J.-P.\ Pellonp\"a\"a, 
"Complete characterization of extreme quantum observables in infinite dimensions,"
J.\ Phys.\ A: Math.\ Theor.\ {\bf 44}, 085304 (2011).

\bibitem{Part1} 
J.-P.\ Pellonp\"a\"a, "Quantum instruments: I.\ Extreme instruments", J.\ Phys.\ A: Math.\ Theor.\ {\bf 46}, 025302 (2013).

\bibitem{Part2} 
J.-P.\ Pellonp\"a\"a, "Quantum instruments: II.\ Measurement theory", J.\ Phys.\ A: Math.\ Theor.\ 
{\bf 46}, 025303 (2013). 

\bibitem{Part3}
J.-P.\ Pellonp\"a\"a, "Modules and extremal completely positive maps", 
Positivity, in press, {\tt DOI: 10.1007/s11117-013-0231-y}.

\bibitem{Pell}  
J.-P.\ Pellonp\"a\"a, "Complete quantum measurements break entanglement", Phys.\ Lett.\ A {\bf 376},  3495 (2012)

\bibitem{Pell'}  
J.-P.\ Pellonp\"a\"a, "Complete measurements of quantum observables",  Found.\ Phys., in press, {\tt DOI: 10.1007/s10701-013-9764-y}.

\bibitem{lyhyt}
D.\ Reeb, D.\ Reitzner, and M.\ M.\ Wolf,
"Coexistence does not imply joint measurability",
J.\ Phys.\ A:\ Math.\ Theor.\ {\bf 46}, 462002 (2013).

\end{thebibliography}
\end{document}